\pdfoutput=1
\documentclass[10pt,conference,letterpaper]{IEEEtran}
\usepackage{times,amsmath,epsfig}
\usepackage{balance}
\usepackage{caption}
\usepackage{hyperref}
\usepackage{url}
\usepackage{microtype}
\usepackage{paralist}
\usepackage{booktabs}
\usepackage{amssymb}
\usepackage{mathtools}
\usepackage{listings}
\usepackage{cite}
\usepackage{subscript}
\usepackage{stmaryrd}
\usepackage{pifont} 



\usepackage{amsthm}
\usepackage{float}
\usepackage[algoruled, linesnumbered, vlined]{algorithm2e}
\usepackage[utf8]{inputenc}

\usepackage{prp-macros}

\usepackage{setspace}

\makeatletter
\newcommand\subparagraph{%
  \@startsection{subparagraph}{5}
  {\parindent}
  {3.25ex \@plus 1ex \@minus .2ex}
  {-1em}
  {\normalfont\normalsize\bfseries}}
\makeatother
\usepackage{titlesec}
\let\subparagraph\relax 

\usepackage{titlesec}

\setlength{\dbltextfloatsep}{1pt}  
\setlength{\textfloatsep}{1pt}
\setlength{\intextsep}{1pt}
\setlength{\dblfloatsep}{1pt}

\titlespacing*{\section}
{0pt}{.5ex plus 1ex minus .2ex}{0ex plus .2ex}
\titlespacing*{\subsection}
{0pt}{.5ex plus 1ex minus .2ex}{0ex plus .2ex}
\titlespacing*{\subsubsection}
{0pt}{.5ex plus 1ex minus .2ex}{0ex plus .2ex}

\newtheorem*{example*}{Example}
\newtheorem{theorem}{Theorem}[section]

\newtheorem{lemma}[theorem]{Lemma}

\lstset{ %
   belowskip=-2em,
   backgroundcolor=\color{white},   
   basicstyle=\scriptsize\ttfamily,        
   breakatwhitespace=false,         
   breaklines=true,                 
   captionpos=b,                    
   commentstyle=\color{Brown},    
   deletekeywords={...},            
   escapeinside={\%*}{*)},          
   extendedchars=true,              
   frame=,                    
   keepspaces=true,                 
   keywordstyle=\color{MidnightBlue},       
   language=,                 
   morekeywords={*,range,var},            
   numbers=none,                    
   numbersep=5pt,                   
   numberstyle=\tiny\color{gray}, 
   rulecolor=\color{black},         
   showspaces=false,                
   showstringspaces=false,          
   showtabs=false,                  
   stepnumber=1,                    
   stringstyle=\color{RedOrange},     
   tabsize=2,                       
   title=\lstname,                   
   moredelim=[is][\color{blue}\bfseries\underbar]{@}{@},
}

\lstdefinestyle{myCSharp}{
   language={[Sharp]C},
   basicstyle=\scriptsize,
   belowskip=-2em,
   backgroundcolor=\color{white},   
   basicstyle=\scriptsize\ttfamily,        
   breakatwhitespace=false,         
   breaklines=true,                 
   captionpos=b,                    
   commentstyle=\color{Brown},    
   deletekeywords={...},            
   escapeinside={\%*}{*)},          
   extendedchars=true,              
   frame=,                    
   keepspaces=true,                 
   keywordstyle=\color{MidnightBlue},       
   morekeywords={var},            
   numbers=none,                    
   numbersep=5pt,                   
   numberstyle=\tiny\color{gray}, 
   rulecolor=\color{black},         
   showspaces=false,                
   showstringspaces=false,          
   showtabs=false,                  
   stepnumber=1,                    
   stringstyle=\color{RedOrange},     
   tabsize=2,                       
   title=\lstname,                   
   moredelim=[is][\color{blue}\bfseries\underbar]{@}{@},
}

\lstdefinestyle{mySQL}{
   language={SQL},
   basicstyle=\scriptsize,
   belowskip=-2em,
   backgroundcolor=\color{white},   
   basicstyle=\scriptsize\ttfamily,        
   breakatwhitespace=false,         
   breaklines=true,                 
   captionpos=b,                    
   commentstyle=\color{Brown},    
   deletekeywords={...},            
   escapeinside={\%*}{*)},          
   extendedchars=true,              
   frame=,                    
   keepspaces=true,                 
   keywordstyle=\color{MidnightBlue},       
   morekeywords={var},            
   numbers=none,                    
   numbersep=5pt,                   
   numberstyle=\tiny\color{gray}, 
   rulecolor=\color{black},         
   showspaces=false,                
   showstringspaces=false,          
   showtabs=false,                  
   stepnumber=1,                    
   stringstyle=\color{RedOrange},     
   tabsize=2,                       
   title=\lstname,                   
   moredelim=[is][\color{blue}\bfseries\underbar]{@}{@},
}

\graphicspath{{images/}}


\captionsetup[figure]{singlelinecheck=false, font={sf,small},
  labelfont=bf, justification=centering, margin=0pt} 
\captionsetup[table]{singlelinecheck=false, font={sf,small},
  labelfont=bf, justification=centering, margin=0pt} 
\captionsetup[lstlisting]{singlelinecheck=false, margin=0pt, font={sf,small},
  labelfont=bf, justification=centering}

\setlength{\textfloatsep}{5pt}
\setlength{\dbltextfloatsep}{5pt}
\setlength{\abovecaptionskip}{5pt}
\setlength{\floatsep}{5pt}

\def\Snospace~{\S{}}

\DeclareMathOperator*{\argmax}{arg\,max}
\DeclareMathOperator*{\argmin}{arg\,min}

\SetCommentSty{mycommfont}

\newcommand{\ds}{\textsc{Xtructure}\xspace}
\newcommand{\system}{\textsc{Xsystem}\xspace}

\newcommand{\CM}[1]{}
 \newcommand{\SG}[1]{}
 \newcommand{\LRB}[1]{}
 \newcommand{\srm}[1]{}

\newcommand{\eat}[1]{}

\renewcommand{\ldots}{\ifmmode\mathinner{\ldotp\kern-0.1em\ldotp\kern-0.1em\ldotp}\else.\kern-0.13em.\kern-0.13em.\fi}

\begin{document}


\title{Extracting Syntactic Patterns from Databases}



\author{%
{Andrew Ilyas, Joana M. F. da Trindade, Raul Castro Fernandez, Samuel Madden}%
\vspace{1.6mm}\\
\fontsize{10}{10}\selectfont\itshape
\,CSAIL, MIT
\fontsize{9}{9}\selectfont\ttfamily\upshape
$<$aiilyas,jmf,raulcf,madden$>$@csail.mit.edu
}


\maketitle


\begin{abstract}
 
Many database columns contain string or numerical data that conforms to a
pattern, such as phone numbers, dates, addresses, product identifiers, and
employee ids. These patterns are useful in a number of data processing
applications, including understanding what a specific field represents when
field names are ambiguous, identifying outlier values, and finding similar
fields across data sets.

One way to express such patterns would be to learn regular expressions for each
field in the database.  Unfortunately, existing techniques on regular
expression learning are slow, taking hundreds of seconds
for columns of just a few thousand values. In contrast, we develop
\system, an efficient method to learn patterns over database columns in 
significantly less time.

We show that these patterns can not only be built quickly, but are expressive
enough to capture a number of key applications, including  detecting outliers,
measuring column similarity, and assigning semantic labels to columns (based on
a library of regular expressions).  We evaluate these applications with datasets
that range from chemical databases (based on a collaboration with a
pharmaceutical company), our university data warehouse, and open data from
MassData.gov.

\end{abstract}

\section{Introduction}
\label{sec:introduction}

Modern enterprises store their data in a wide range of different systems,
including transactional DBMSs, data warehouses, data lakes, spreadsheets, and
flat files. Data analysts often need to combine data from these diverse data
sets, frequently incorporating external data from even more sources.  A key
challenge in this setting is finding related data sets that can be combined to answer some
question of interest.

As an example, analysts at Merck---a pharmaceutical company---often need to
join  tables that contain chemical compounds. Unfortunately, there are at
least three identifier formats (\eg InChI, InChIKey, and SMILES, shown in
\F\ref{fig:compoundids})  used internally in Merck, not to mention
additional formats that may be used in external data sources. Because of this
diversity of ID formats, a simple text search is not sufficient to find relevant
tables---attribute names are different. Indeed, they cannot even perform an
approximate search to find similar content as these identifiers are not
comparable.  Manually building a mapping between the identifiers in the
different formats and creating a lookup table is an expensive option.


\begin{figure}[h]
\centering
\includegraphics[width=0.7\columnwidth]{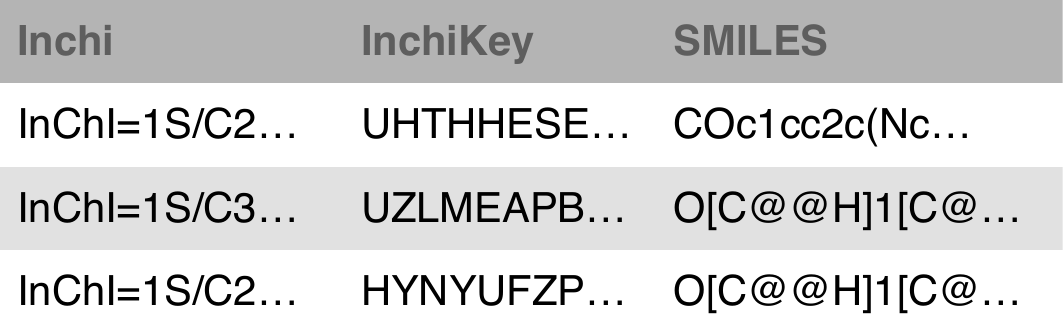}
\caption{Example of different chemical compound ID formats}
\label{fig:compoundids}
\end{figure}

A better option would be to  label the relevant attributes with useful
metadata, \eg assign a \emph{chemical identifier} label to all identifier columns in the table
that represent. Unfortunately, manual labeling is also infeasible in an company
with large volumes of data: it requires a great deal of time and is error prone
as it may miss many tables that contain relevant information, especially when
considering external data.

To address this problem, we observe that many relevant attributes in enterprise
databases are \emph{highly structured}, \ie they follow simple syntactical
patterns.  For example, in \F\ref{fig:compoundids} the InChi number always starts
with the pattern \emph{InChI=} and the InChiKey is a 14-character followed by a
hyphen, followed by a 10-character followed by another hyphen and an additional
character \cite{inchikey}.  More common examples of structured attributes are
dates, product identifiers, phone numbers, enumerated types (gender, etc), and so on. 
Often these columns are stored as strings in the database, but if they could be
labeled with richer structural information about the format of values, indexing, searching and comparing values,
and finding exceptional or outliers values, could be done much more efficiently.

In this paper, we introduce \textbf{\system}, a method to learn and represent
syntactic patterns in datasets as data structures called \ds{s}. Once \system learns
a collection of patterns, analysts can use them to conduct several commonly
performed tasks, including: {\it automatic label assignment},
where data items are assigned a class by comparing them to a library of known
classes (written as regexs or \ds{s}); {\it finding syntactically similar content},
where learned \ds{s} are compared to see if they are similar, and {\it outlier detection},
where a learned \ds for a single item is compared to other \ds{s} to check that its structure
is different.  These applications share two common requirements: (i) \ds{s} must be quickly synthesizable and (ii) \ds{s} must be comparable to each other and to regular expressions.

In addition to supporting these requirements, \system must: i) be able to work without human
intervention, as neither semi-automatic nor interactive tools scale for
large amounts of data; ii) learn syntactic patterns fast, which calls
for both an asymptotically efficient model as well as a parallelizable
implementation; and iii) be quickly synthesizable and manipulatable given only
raw datasets, since this is all that many analysts may be able to initially access.
Speed of learning is crucial for real world scenarios, as not all data analysis
tasks can cope with stale data.

\mypar{Other Methods} A natural question to ask is why not to use existing methods
for information extraction that different communities have been improving over
the years? We discuss methods in detail in the related work section, but in
general we find that these systems either lack speed, lack autonomy, or make
assumptions about the problem setting that are too strong for realistic use.
One general method that seems relevant for structured data is
inference of regular expressions \cite{FERNAU2009521} to represent the structure of
each column. This is unfortunately NP-hard, due to the expressiveness of the
model \cite{gold}. We argue, and show through many examples, that regexes'
expressive power is not necessary for syntactic-based discovery applications in
databases. This is the key observation that we use to design a less-expensive,
much faster-to-learn structure which can express a subset of regular expressions
that satisfies most applications.  We show that existing regex learning
algorithms are impractically slow, taking thousands of seconds to learn a regex
over just a few hundred values.
 Other general methods to extract patterns grow out of the need for
identifying important entities within unstructured or semi-structured data, such
as NLP-based techniques to identify patterns from text or wrapper induction
techniques to do the same on webpages built on HTML. None of those techniques
are designed for the case of structured data, which has become a big bottleneck
in the era of big data, with people placing a variety of structured data into data
lakes. 

\newcommand{\specialcell}[2][c]{%
  \begin{tabular}[#1]{@{}c@{}}#2\end{tabular}}

\mypar{\system}Our approach learns syntax from examples incrementally. For each example, it exploits
the existence of delimiters in known entities to split the problem of extracting
the pattern into learning the syntax of each of the \emph{tokens} separated by
those delimiters. The underlying data structure used to learn each token is a branching linear distribution
sequence that is equivalent to a Deterministic Acyclic Finite State Automaton (DAFSA),
which is asymptotically simpler to learn than minimal Deterministic Finite Automata (DFA), often used
in regular expression learning. The learning procedure relies on a \emph{branch and merge} strategy
that allows us to incrementally adapt a prior to new observed examples.  This
permits us to capture different syntactical structures that appear in the same
column. This branch and merge strategy is also at the center of the
parallelization approach used in \system.


We evaluate \system on the three applications mentioned above on real datasets
ranging from our university's data warehouse, open government data and a public
chemical database. We find that \system can form a syntactic representation of
given data much faster than automatic DFA learners, and that we can use it
effectively for our target applications.

Before describing the details of our implementation, we now formalize the
\system problem statement and requirements.

\section{Motivation and Requirements}
\label{sec:background}

In this section, we first restate and elaborate on the applications that
motivated us to build \system, and then use those applications to derive a set
of requirements, which we use to derive the features of our new system.

\subsection{Application Scenarios}
\label{subsec:requirements}

We focus on three applications of \system that we have identified while working
with collaborators in  pharmaceutical, telecommunications and data
integration applications.

\vspace{.05in}
\noindent {\bf 1. Automatic Label Assignment.} 
Automatic label assignment attaches a semantic type (e.g. ``chemical compound ID'', ``phone number'')
to columns in a data set, so that users can understand the content of
columns and perform semantic search for similar types of columns.
A key observation is that many different semantic types are already available in regex libraries~\cite{regexlib}, 
and for important semantic types inside an organization, writing such a regex is
relatively straightforward.  For example, for the examples of 
\F\ref{fig:compoundids}.

Given such a table of (regex, semantic label) pairs,
the idea in automatic label assignment is to learn a \ds for each attribute in the database,
and then perform a search for similar regexes in this table, to assign a semantic label
to each column in the database.  This introduces two key requirements for \ds{s}:  1) they must be fast
to learn, since we need to infer them for every column in the database, and 2) they must be comparable to 
regular expressions.

\vspace{.05in}
\noindent{\bf 2. Summarization and Attribute Comparison.} Once some interesting
attributes are identified, data analysts often wish to find other similar
attributes across datasets (e.g., to
obtain candidates for joining two datasets together.) One way to achieve this
would be to compare  every pair of attributes in each dataset to each other.
However, a naive quadratic implementation would be prohibitively expensive, which has inspired 
 many approaches based on set-similarity joins and approximate
methods (e.g., \cite{setjoin1, setjoin2}).

\begin{figure}[t]
\centering
\hspace{-0.7cm}
\includegraphics[width=0.5\columnwidth]{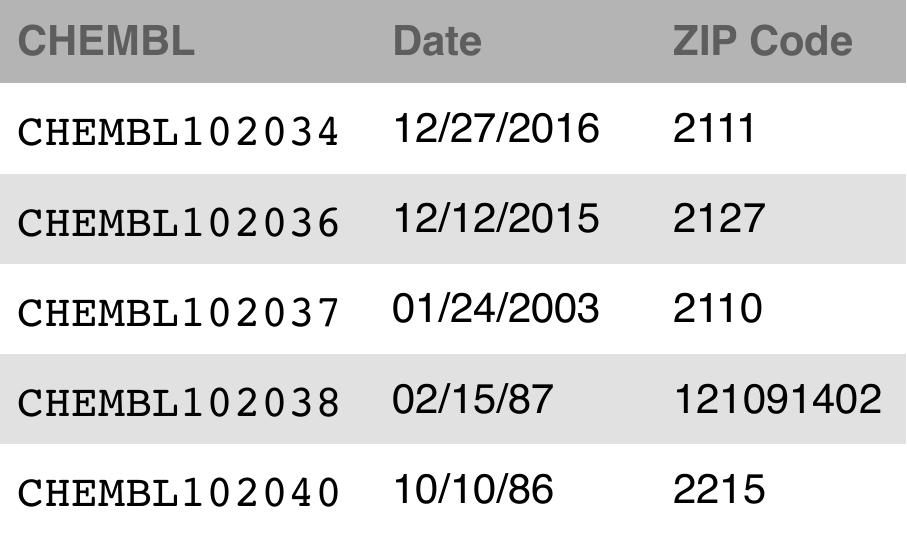}
\caption{Examples of attributes that \system might learn}
\label{fig:structured}
\end{figure}


A complementary approach that we advocate 
is to learn a more compact representation, e.g., a \ds for each attribute in a database.
We can then compare these representations
instead of the raw data, which offers several benefits. First, the \ds is
interpretable by humans, helping to identify content quickly, while requiring
only a fraction of the space used by the original data. For example, in the case
of the data of \F\ref{fig:structured} it is possible to represent the CHEMBL id
with a single pattern that entirely captures the structures. Second, if the two
\ds{s} can be compared directly, then that similar content can be found without
performing I/O to read data from
the source database. 
Last, the cost of learning the \ds is paid only once,
and can be reused subsequently for other applications as we are describing in
this section.


To be useful for summarization and comparison, \system must learn human-readable {\ds}s, similar to regexes in common programing languages, and \ds{s}
must be comparable to one another, to permit finding similar content.

\vspace{.05in}
\noindent{\bf 3. Syntax-Based Outlier Detection.} One long-standing problem in data
management is concerned with data quality; in particular, errors occur 
frequently, whether due to data entry or anomalous values or readings~\cite{errors}.


We observe that by learning the syntactic pattern of an attribute, we can
detect many types of errors, particularly those that are {\it syntactic
outliers}, \ie elements that do not closely the match a learned \ds.
Consider the example of \F\ref{fig:structured}, with real ZIP codes from Boston.
The 4th cell value an erroneous ZIP code. In this case, it is possible to detect
that it has a different length than other records in the same attribute and
 does not fit the general syntactic pattern of the column. 

To be able to detect syntax-based outliers, \ds{s} must support the concept
of a \emph{scoring fit}, \ie a numeric score capturing how well a value fits a learned \ds. Also,
\system to be used as an outlier detector, it must not overfit the \ds to all
the values, or it will not detect outliers. Instead, it must represent the
general syntactical pattern and not capture the content of a few outliers. 

\subsection{Summary of Requirements}

Based on the previous applications, we can summarize \system's requirement for \ds{s} as follows:

$\bullet$ \textbf{Comparable to Each Other.} We need to be able to compare \ds{s} to each other.
 Intuitively, we want the distance between two \ds{s}
$X_i$ and $X_j$ to approximate some sort of ``average distance" between the
domains of possible values that $X_i$  $X_j$ range over. 
This is useful in our summarization and comparison application.

$\bullet$ \textbf{Comparable to DFAs/Regular Expressions.} To support label assignment, we need
to be able to compare a \ds{s} to a regular expression.

$\bullet$ \textbf{Able to Quantify Fit.} Not only should \ds{s} be comparable, 
but we should be able to efficiently compute a numeric score that captures the goodness of fit of
two \ds{s}, or of a \ds to a regular expression.

$\bullet$ \textbf{Quick to Learn.} \system should be able to learn \ds{s}
efficiently and in parallel, to support applications that need to compare large
numbers of \ds{s} to each other or to regular expressions.

\subsection{Motivation for a New Approach}

It may seem that the above requirements could be trivially satisfied by learning
regular expressions (DFAs) over each column.  However, as described in more
detail in our related work (Section ~\ref{sec:relatedwork}), regular expression
learning~\cite{relie,angluin,349918,FERNAU2009521,Brazma:1993:EIR:168304.168340,Lee:2016:SRE:2993236.2993244,Bartoli:2014:ASR:2780227.2780354,Bartoli:2016:IRE:2925263.2925390,Brauer:2011:EIE:2063576.2063763}
is in general an NP-complete problem, and in practice solutions for finding
regular expressions are extremely inefficient. 

\begin{figure}[t]
\vspace{-.2in}
  \centering
      \includegraphics[width=0.7\columnwidth]{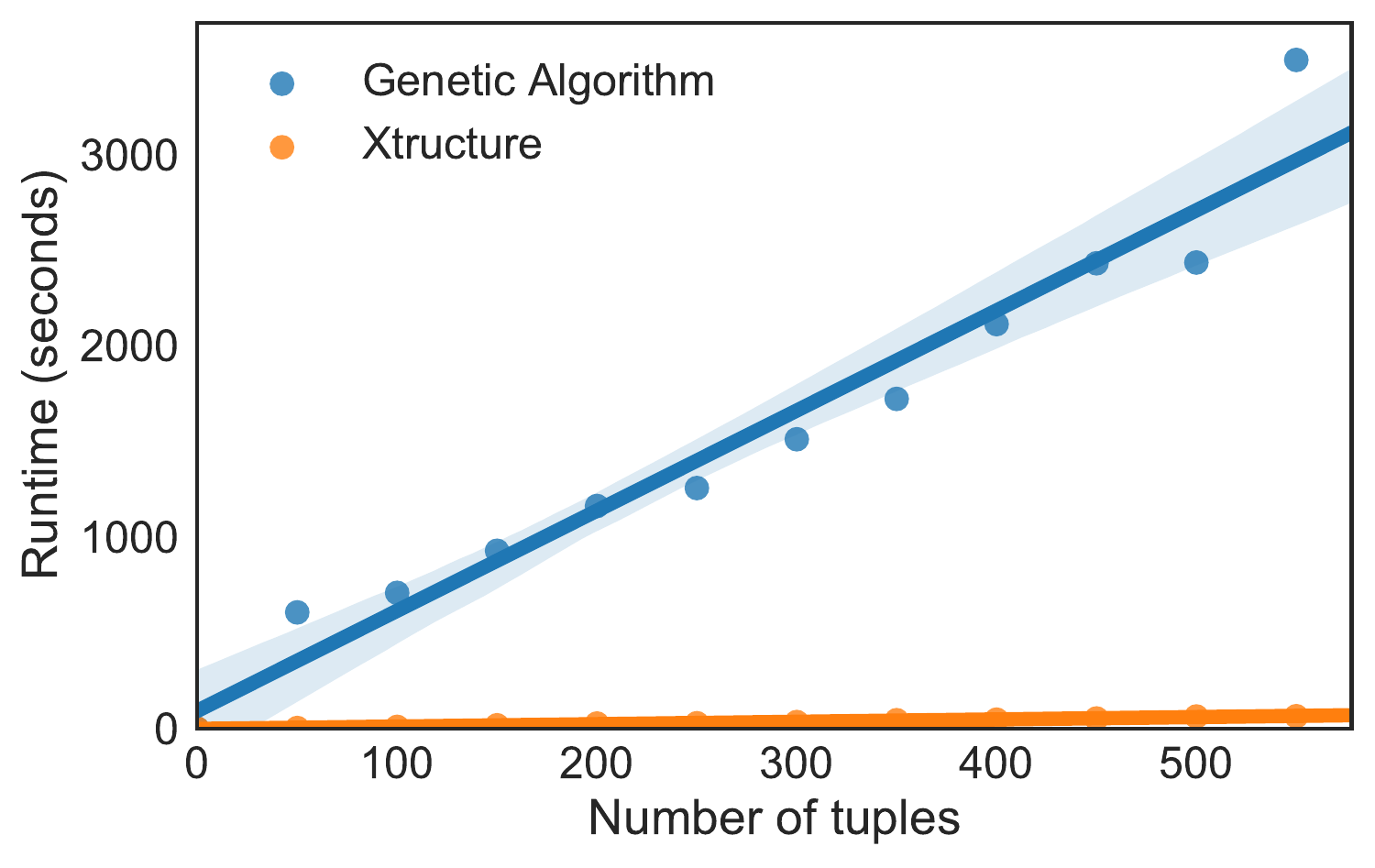}
  \caption{\system runtime vs. full regex learning algorithm~\cite{Bartoli:2016:IRE:2925263.2925390}}
  \label{f:runtime}
\end{figure}

\mypar{How inefficient is to learn regex?} To build intuition about this
inefficiency, we used a state-of-the-art regex inference
algorithm~\cite{Bartoli:2016:IRE:2925263.2925390} to learn a regex over a few
hundred tuples and found that it took around an hour to complete.
Figure~\ref{f:runtime} shows the speed of learning a \ds from data using \system
with the state of the art algorithm~\cite{Bartoli:2016:IRE:2925263.2925390}.
Here we show the time to learn a regular expression or a \ds over a column, as
the length of the column (in tuples) grows.  The genetic algorithm based method
is infeasible for our target applications because it takes thousands of seconds
to learn a regular expression for a single column, making it impractical to use
in even a moderate collection of databases with a few hundred columns.  In
contrast, the performance of \system with \ds{s} grows sub-linearly with the
number of tuples, as we will show in subsequent sections.

If regular expressions were available, we could use them to solve the
application scenarios we showed above. However, because regular expression
learning algorithms solve a more complex problem than what is needed for the
applications we have identified at a high computational cost, we sought a
 simpler language that is both efficient to learn and that is
sufficient to capture the structure of many database columns.

\mypar{The Opportunity} Fortunately, we have observed that real data in databases is often quite simple,
and does not require the full expressivity of DFAs/regular expressions.  In
particular, most attributes in database have the following properties:

\begin{figure}[t]
\vspace{-.2in}
\centering
\hspace{-0.5cm}
\includegraphics[width=0.7\columnwidth]{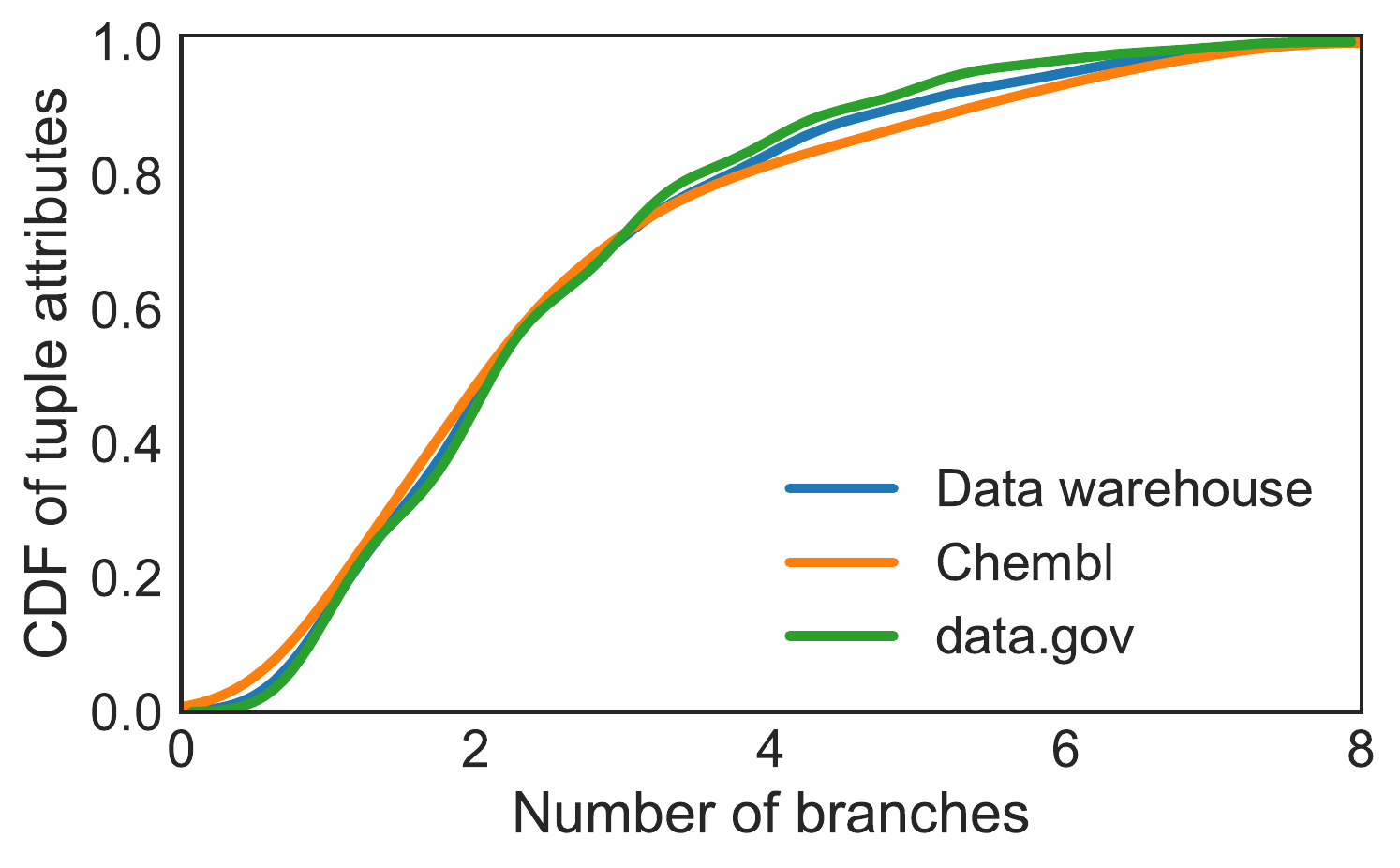}
\caption{Number of \ds branches per tuple attribute (column) for different datasets.}
\label{fig:branchescdf}
\end{figure}

$\bullet$ \textbf{Simple structure.} Through the wildcard ``*" and ``+" operators,
regexes allow infinite variability of structure within a domain. In practice, on
the MassData dataset (open data from Massachusetts), we found that around 20\%
of columns are fixed length,  over half have only 3 distinct
column lengths,  more than 85\% have average length less than 10, and 99\% have average length less than 50. This makes sense because
databases are designed to be easy to manipulate and process, and constraining
the data formats into well-structured values helps achieve this goal.
Further, many regular expression learning papers focus on learning
a {\it minimal} regular expressions, but since database columns are already simple,
 minimality is not a primary concern, especially if it comes at the cost of efficiency.

$\bullet$ \textbf{Consistent structure.} The optionality operator in regular
expressions allows one to construct concise expressions such as ``AB(C)DE."
Instead, the equivalent ``$ABDE|ABCDE|...$", which separates each pattern into a
different \emph{branch} is simpler to learn. We found 40\% attributes of data.gov
can be represented by at most 2 global branches, and nearly 100\% by at most 8.
We show the number of branches required to represent the data of 3 real datasets
in \F~\ref{fig:branchescdf}, confirming the same trend. Again, regular expressions
favor expressivity over efficiency, which isn't necessary.





\notera{I removed the previous paragraph where we were pseudo-formally
introducing Xtructure repr.}

In short, regexes are neither necessary (too expressive) nor
sufficient (they are too slow) for solving the problem of structure learning
addressed in this paper. Instead, as we show, less complex \ds{s} can be learned
 more efficiently while still capturing the structure of 
real databases.

\section{\system Implementation}
\label{sec:approach}

\notera{address small comments from R2.D3}

In this section we introduce the \ds model to learn syntactical patterns from
structured data. 

\begin{figure}[ht]
  \centering
  \includegraphics[width=0.6\columnwidth]{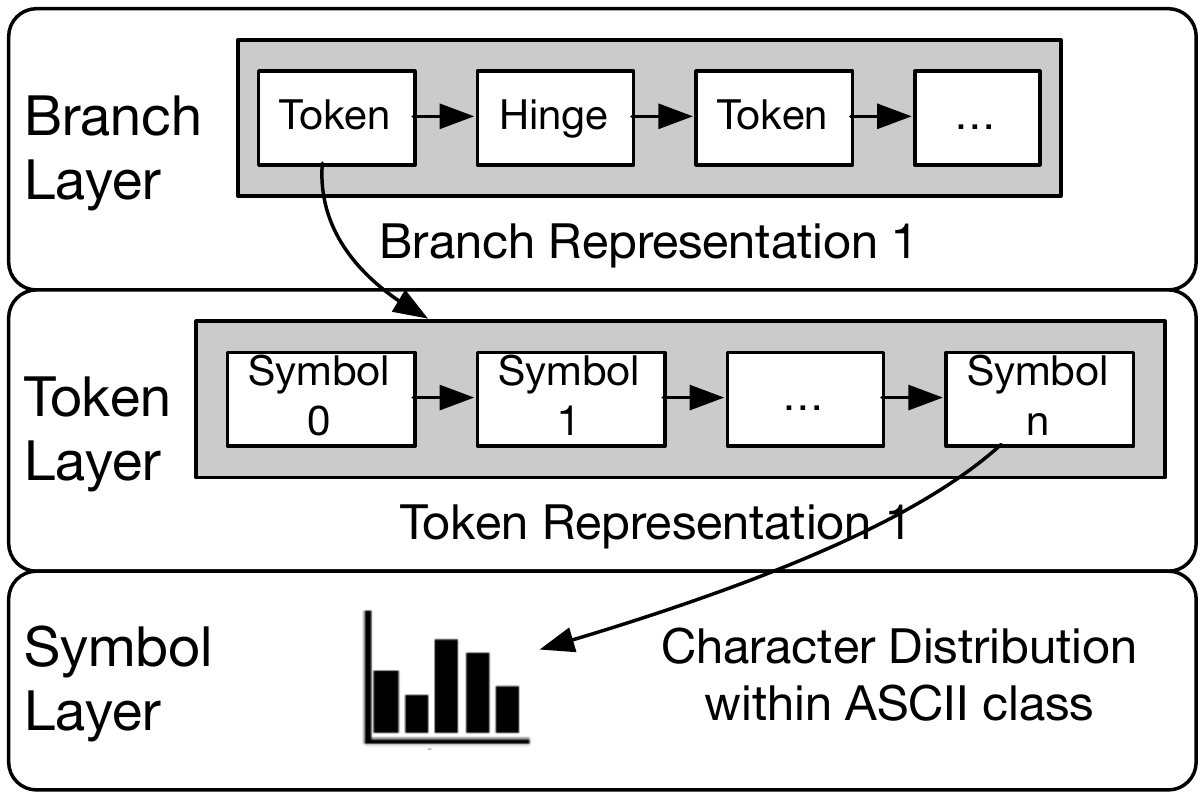}
  \caption{Xtructure data model}
  \label{fig:general_structure}
  \vspace{-.1in}
\end{figure}

%

\subsection{The \ds Model}
\label{subsec:modeloverview}

The goal of \system is to learn a \ds from examples $\{T\} \subseteq
A$ incrementally (tuple by tuple). To do this, we design an architecture that
allows us to probabilistically model each example, and thus at any point output
the ``current" representation.  The architecture of \ds (\F~\ref{fig:general_structure}) has several layers with distributions at the
foundation; tuples are fit into the model by passing them through this layered
structure in a well-defined way. The layers are organized hierarchically, with
each one taking care of a different aspect of the learning process. We explain
each layer's role next.

The bottom layer in the hierarchy is the \textbf{symbol layer}, which holds a
distribution over the ASCII characters that occur at a given position in the
input tuples. This permits us to represent a position in a tuple as a character
class, an or-statement, a single character, or a wildcard (``.") based on the
distribution. For example, if a series of mm/dd/(yy)yy dates are fed to a \ds,
the first character will hold a distribution containing only the values $1$ or
$0$, (since $0 \leq $months$\leq 12$) of the year. The second will eventually
converge to a uniform distribution over $[0,9]$ and thus it will be represented
with the character class \emph{digit}, \emph{D}. We explain how we decide each
representation from later in the paper.

The  \textbf{token layer} represents sequences of characters from the original tuples, 
or {\it tokens}, 
obtained by splitting the original tuple according to a set of
\emph{delimiters}, \eg \emph{-}, \emph{/}, \emph{\#}.  The
intuition is that delimiters 
often capture substructure of tuples. Consider the ``10/1/2017'' date as an
example: here the three tokens are separated by
\emph{/}. Each
token is  represented in a \emph{token representation}, which is simply a
linked list of symbols (from the symbol layer).  For dates, the
``months" in the date will be a token in the token layer, eventually
represented as \emph{(0|1)D}. When no delimiters are available in the data, the
entire string is represented as a single token.

Tokens of different lengths cannot be represented with a
single token layer. The next layer in the hierarchy, called \textbf{branch
layer}, deals with variable-length  data. A branch layer consists of a
list of token layers, and can represent an entire tuple. In particular, a branch
layer represents a list of words -- represented by token structures --
interleaved with delimiters. In our running example, we may find dates with two
different formats for the year, a 4- and a 2-digit one. These two variations
will be represented with two different branches in a \emph{branch representation}.

Each \ds has several branch representations to represent
attributes with different syntactical patterns, for example, tuples with
different lengths. It is common to find dates with many different formats, due
to data quality issues, as well as IDs, capitalization typos, etc.

\subsection{Learning a \ds}
\label{subsec:fitting}

\ds{s} are adapted after each input tuple is consumed. When it gets a new tuple, \system
chooses an existing branch for the tuple, if one exists, or
creates a new branch and seeds it with the input. This decision is made based on
a measure of \emph{scoring fit}. The branch representation then segments the input into
tokens, $K$, based on a set of delimiters, and splits each token, $k$, into
characters, updating the token and symbol layers.

The following sections describe: (1) how we compute  scoring  fit in Section \ref{subsec:scoring}, (2) the
branch-and-merge algorithm to support multiple branches in Section
\ref{subsec:branchandmerge}, (3) the approach to tokenizing input tuples and feeding
characters to the individual layers in Sections \ref{subsec:tokenize} and
\ref{subsec:modeledrep}, and (4)
 an optimization to speed up learning in Section \ref{subsec:earlystop}.

\subsubsection{Fitting Tuples: Scoring Fit}
\label{subsec:scoring}

While learning a \ds $X$, we must understand how well a tuple, $t$, ``fits" into
the structure defined by $X$. We introduce a scoring fit measure for
this. More formally, given a tuple $t$ and a \ds, $X$, we define an operation
$d(t,X): \mathbb{R}^+$ that indicates how far $t$ deviates from the pattern
represented by $X$. This function is useful to fit new examples, as well as to
compare \ds, both to itself and to representations learned with other methods.

To build this function, instead of comparing each character in $t$ to a corresponding
``character" in the representation $X$ (which is ill-defined, since our model 
holds a distribution over characters rather than a single character), we 
look at the characters $S$ a symbol layer \textit{represents}, 
and assign score $d(s_i \in S, l_i)$ for how close each
 $s_i$ matches the representation in symbol layer $l_i$.  To define $d$, 
we use \textsc{Get-Ascii-Class}(c) as the UNIX class  (e.g., alphanumeric, white space, etc) of a character
$c$ , and $l.class$ as the character class of a symbol layer $l$
(referred to as max\_class in Algorithm~\ref{alg:layerrep}). We also define
$l.is\_class$ to be a boolean indicating whether the layer's representation is
its character class. This decision is based on a $\chi^2$ test. If its p-value
cannot be represented as an ``OR" operation over characters, then,
$$d(s_i, l_i) = \begin{cases}
               1 \text{ if }\textsc{Get-Ascii-Class}(s_i) \neq l_i.class\\
               \alpha \text{ if not $l_i$.is\_class and } s_i \not\in l_i.chars\\
               0 \text{ otherwise}
            \end{cases}$$
where $l_i.chars$ are the characters represented in the symbol layer $l_i$, and
the parameter $\alpha$ is used to determine how much exact character matches are 
prioritized compared to matches in class only (i.e two characters).
In practice, we set up $\alpha=\frac{1}{5}$ as a reasonable value for this relative weighting.

We use  $d$  to propagate the symbol layer scoring fit through
a \ds's layers, leading to a general scoring fit of a tuple with respect to the
model. In particular, for token representations $K$, branch representations $B$
and a modeled representation $X$, the distance of a tuple $t$ to $X$ is defined by:
$$d(t, X) = \min_{b \in B} d(t, b)$$ 
that is, the minimum distance of the tuple with one of the branch
representations, $b$, of $X$, which is in turn defined as: 
$$d(t, b) = \sum_{k_i \in b, t_i \in t} d(t_i, k_i)$$
where $t_i$ are the tokens of the input tuple, $t$, that are compared with the
token structures, $k_i$, of $X$ as follows:
$$d(t_i, k) = \left[\sum_{l_i \in k, s_i \in t_i} d(s_i, l_i)\right] + |len(t_i) - len(k)|$$
Note the extra term in the last equation used to pad with null characters
whichever is shorter between the token structure, $k_i$ in $X$ and the token
$t_i$ in the tuple $t$. This ensures \system does not incorrectly penalize smaller valid instances of the underlying finite language,
while still creating a new branch in the structure for them.  For example,
a column that contains several instances of ``123'' and one instance of ``1'',
the latter would be padded with 2 null characters.

\subsubsection{Representing Multiple Branches}
\label{subsec:branchandmerge}

In practice, data from the same attribute may contain values with different syntactical
patterns. For example, an ID might be a 10-digit number, or
simply ``N/A". This phenomena inspires \ds's multiple branch representations 
(that is, why we allow $R$ to be $b-$dimensional). However, we have no way of knowing 
\textit{a priori} how many different patterns are in a set of examples, a \ds
must somehow manage multiple branches, updating and representing them appropriately.

Given a new input tuple, \system must decide whether to fit it into an existing
\ds branch, or create a new branch capture the tuple's syntactical
structure. For this, we use the scoring fit. For each input $t$, \system finds the
``best matching" branch by doing $b_{best} = \argmin_b d(t, b)$; if $d(t,
b_{best})$ is below a \emph{branching threshold}, the tuple is fit into that
branch, otherwise a new ``empty" branch is created. The existence of this
branching threshold introduces the challenge of how to tune it. To avoid manually
tuning such hyperparameter, we introduce an adaptive \emph{branch-and-merge}
technique.

\mypar{Branch-and-Merge algorithm} The algorithm works as follows. We hide the unintuitive and data-dependent
hyperparameter, and instead expose a \emph{maximum
branches} parameter, that indicates the maximum number of structures that are meant to
be represented by a \ds (this is $b$ in the formal definition). 
This parameter can be set up based on domain knowledge,
or user preference, \eg if an analyst knows there are 3  ways of
representing a business entity, he or she can choose 3 as the number of
branches, as no more than those are expected to appear in the data. 

Given a fixed branching threshold and the maximum number of branches desired by
users, \system proceeds as follows: if the number of branches ever exceeds the
specified maximum, then we compute a pairwise distance between branches. The two
closest branches $b_1$ and $b_2$ are merged -- by fitting generated tuples by
the subsumed branch into the one subsuming -- and the new ``branching threshold"
is set to $d(b_1, b_2)$.

This adaptive mechanism allows \system to correct for undershot initial
thresholds, but not overshot ones, so in practice, the initial branching
threshold is set to a small $\epsilon > 0$. The entire algorithm,
including both picking the best branch and branch-and-merge, is shown in further
detail in Algorithm~\ref{alg:highlevelfit}.


\begin{algorithm}[h]
\scriptsize
\SetKwProg{Fn}{Function}{ }{end}
\SetKwData{Branch}{b}
\SetKwData{Branches}{branches}
\SetKwData{BestBranch}{best\_branch}
\SetKwData{Threshold}{branching\_threshold}
\SetKwData{Word}{word}
\SetKwData{MaxBranches}{max\_branches}
\SetKwData{A}{$B_{outer}$}
\SetKwData{B}{$B_{inner}$}
\SetKwData{i}{$B_i$}
\SetKwData{j}{$B_j$}
\SetKwData{BestFit}{$best\_fit$}

\Threshold $\gets \epsilon$ \\
\Fn{learn\_new\_word(\Word: String) : void} {
	\BestBranch$\gets \argmin_{\Branch \in \Branches} \Branch.fit\_score(word)$ \\
	\If{\BestBranch.fit\_score(word) $<$ \Threshold} { 
		\BestBranch.add(\Word)
	} \Else {
		\Branches.add(new Branch(word))
	}
	\If{\Branches.length $>$ \MaxBranches} {
		\tcp{fit(\i, \j) returns how well \i fits into \j}
		\A,\B$\gets \argmin_{(\i, \j) \in \Branches} fit(\i, \j)$ \\
		\Threshold$\gets fit(\A, \B)$ \\
		\A.add\_word(w)$\ \forall\ w \in \B.learned\_words$ \\
		delete \B
	}
}
\caption{Fitting new words into \ds}
\label{alg:highlevelfit}
\end{algorithm}

\smallskip

\mypar{Parallel Learning} One advantage of the \emph{branch-and-merge} algorithm
is that it facilitates parallel learning. When fitting a model,
we can use multiple workers, each one reading disjoint sets of tuples and
fitting them independently. This has the benefit of exploiting the parallelism
readily available in modern architectures, but leads to more than one
representation per attribute. At this point, we can use the
\emph{branch-and-merge} algorithm to merge the branches of the different built
models, leading to a representation equivalent to the one that a single worker
would have learned.

\subsubsection{Tokenization and Character Fitting}
\label{subsec:tokenize}

To update the token layers, the input tuple is split into tokens and then each
token is fed to the layers of its corresponding token structure. The tokenizer
uses special characters (delimiters) as reference for alignment. The positioninig
of these characters on a string is often an indicator of data type. For example,
IPv4 addresses blocks are separated by ``.'', while dates are usually ``/'' or
``-'' delimited.

\subsubsection{Modeled Representation}
\label{subsec:modeledrep}

During modeling, after receiving a new example and
determining the token structures, each token is fed to the layers of its token
structure. A symbol layer, as introduced before, holds a distribution of the
characters it has seen, and represents them with their character class when is
statistically significant (see Algorithm \ref{alg:layerrep}). Each
layer is modeled as a sampling problem, under the hypothesis that every
character within the majority character class is equally likely. A $\chi^2$
test of independence is then performed, confirming or rejecting this hypothesis;
if confirmed, the layer represents itself by its character class (lines
11-12 in Algorithm \ref{alg:layerrep}). If the null hypothesis is rejected, then
there exists  significant bias in the data source that should be captured in
the representation,  so the layer instead enumerates all fit tuples in an
or-statement, in order of decreasing frequency, until a specified ``capture
percentage" of the distribution is captured. This
corresponds to lines 14-20 of Algorithm \ref{alg:layerrep}. Running this process
whenever a new example is encountered ensures we always model a valid \ds.


\begin{algorithm}[h]
\scriptsize
\SetKwProg{Fn}{Function}{ }{end}
\SetKwInOut{Input}{input}\SetKwInOut{Output}{output}
\SetKwData{ClassProportions}{class\_proportions}
\SetKwData{MaxProportion}{max\_proportion}
\SetKwData{MaxClass}{max\_class}
\SetKwData{Class}{class}
\SetKwData{Histogram}{histogram}
\SetKwData{Chars}{chars\_to\_capture}
\SetKwData{AllChars}{all\_chars\_seen}
\SetKwData{CP}{capture\_threshold}
\SetKwData{Captured}{captured}
\SetKwData{Representation}{representation}
\SetKwData{Char}{next\_char}
 
 We know \AllChars, and \CP is a parameter \\
 \Output{A string representation of this layer}
 \Fn{compress\_layer() : String} {
 \ClassProportions $\gets$ proportion of each character class seen \\
 \tcp{e.x. \{"A-Z": 0.5, "a-z": 0.25, "1-9": 0.25\}}
 \MaxClass $\gets$ $\argmax(\ClassProportions)$ \\
 \MaxProportion $\gets$ \ClassProportions[\MaxClass] \\
 \If {$\MaxProportion > 0.95$} {
 	\Chars $\gets$ filter($x\rightarrow x \in \MaxClass$, \AllChars) \\
	\Histogram $\gets$ histogram(\Chars, bins=size(\MaxClass)) \\
 } \Else {
 	\Chars $\gets$ \AllChars
	\Histogram $\gets$ histogram(\Chars, bins=sum(size($\Class)\ \forall\ \Class \in \ClassProportions$)) \\
 }
 \If{ChiSquared(\Histogram) $> p$} {
	\KwRet{\MaxClass}
 } \Else {
 	\Captured $\gets$ 0 \\
	sort \AllChars by frequency \\
	\Representation$\gets$ [] \\
	\While{$\Captured < \CP$} {
		\Char$\gets$\AllChars.next() \\
		\Representation.add(\Char) \\
		\Captured$\gets$\Captured$+$frequency(\Char)\\
	}
	\KwRet{$``|"$.join(\Representation)}
 }
 }
\caption{Symbol layer representation of fitted tuples} 
\label{alg:layerrep}
\end{algorithm}

\subsubsection{Early Stopping} 
\label{subsec:earlystop}

When learning from structured data, it is common for
much of the computation time to go to fitting tuples
that do not contribute to the final \ds. Consider, for example, a
long list of well-formatted dates. After a few tuples, the
representation we are modeling will reflect the pattern, and will not change
as additional tuples are processed.

We can \emph{stop} learning when the model has converged and does not change after some
number of new tuples are consumed. To do this, we
 track how much the fit of new tuples changes during fitting. Initially
  the scores are expected to change a lot, they will
decrease and become steady over time -- especially when the
data is regular. The process of early stopping is inspired by the same
application of the Central Limit Theorem as in the representation
comparison. The early stopping process is shown in Algorithm~\ref{alg:branchfit}. 


\begin{algorithm}[h]
\scriptsize
\SetKwProg{Fn}{Function}{ }{end}
\SetKwData{DoneAdding}{done\_adding}
\SetKwData{Score}{score}
\SetKwData{AllScores}{all\_scores}
\SetKwData{Latest}{latest\_scores}
\SetKwData{Word}{word}
\SetKwData{StdScore}{current\_std}

\AllScores$\gets$[] \\
\Latest$\gets$[] \\
\Fn{needed\_sample\_size(\StdScore: float) : int} {
	\KwRet{int($(1.96 * x/0.1)^2$)} 
		\tcp{this is a normal distribution}
}

	\Fn{new\_word(\Word: String) : void} {
	
 \If{not \DoneAdding} {
 	\Score$\gets \sum_{1\leq i\leq |layers|} layers[i].add\_and\_output\_score(\Word[i])$ \\
	\Latest.append(\Score) \\
	\If{len(\AllScores) = 30 \tcp{Application of Central Limit Theorem}} {
		\Score$\gets$ avg(\Latest) \\
		\AllScores.append(\Score) \\
		\Latest$\gets$[]
	}
	\StdScore$\gets$ stdev(\AllScores) \\
	\If{len(\AllScores)$>$needed\_sample\_size(\StdScore)} {
		\DoneAdding$\gets$true
	}
  }
  }
\caption{Fitting tuples to branches}
\label{alg:branchfit}
\end{algorithm}

This technique allows us to skip large amounts of data while still finding good
approximate representations. The method fails when the attribute has
many different branches that are seen only later. For this reason, the technique
is disabled by default, and should be enabled when the user knows the data is
highly regular or randomly shuffled,  or if the user only desires a quick insight.

\subsection{Tuple Generation and Human Readability}
\label{subsec:representation}

A \ds needs to generate tuples that, though not necessarily part of
the given examples, conform to the domain of the examples  ($f(X)$,
formally). This is necessary for comparison, as we see in the next section.
Here, we explain how to generate tuples from a \ds
(\ref{subsec:generatingtuples}). Related to the generation of tuples is a string
representation of \ds which is readable by humans, a useful property to provide
an overview of the data to humans which we describe in \ref{subsec:readability}.


\subsubsection{Generating Tuples from a \ds}
\label{subsec:generatingtuples}

To generate a tuple, \system traverses the layers of the \ds bottom-up. It
generates \emph{characters} through its symbol layers. These are concatenated
into \emph{tokens} by the token layer, which also takes care of interleaving the
delimiters as necessary. Finally, tokens are concatenated into \emph{branches},
and the generator selects randomly the branch that would be chosen to generate
output a tuple.

To make sure each symbol layer generates characters leading to tuples that
represent the structure well, instead of returning the representation of its
character distribution, each symbol layer draws randomly from its corresponding
character distribution, producing a string from the symbol layer. For
convenience, the \textsf{compress\_layer} function of Algorithm
\ref{alg:layerrep} returns such representation.

\subsubsection{Making a \ds Readable}
\label{subsec:readability}

We want to serialize a \ds in a way that is easy to read, akin to how regexes
map the underlying DFA they represent to a string. The algorithm to achieve
this is similar to our tuple generation algorithm, but instead of specific
tuples, we want to output the general string representation that is represented
by \ds.

When traversing a \ds's layers bottom-up, we propagate partial representations
along the way. First, the symbol layers return either an individual character, a
character class (eg. \#, \textbackslash w, etc.), or a group of characters
depending on the result of the chi-squared test described in the previous
section. Then, all the symbol layer representations of a token representation
are appended, leading to a token, meaning that for a token representation $k_1$
with layers $l_1$ through $l_n$, where in the following $||$ represents the
concatenation operator:
$$str(k_1) = \bigparallel_{i=1}^n str(l_i)$$
These token representations are then interleaved with the appropriate
delimiters (kept during the learning process) to form branch
representations, given that:
$$str(b_1) = str(k_1)||h_1||str(k_2)||h_2\ldots$$ 
Finally, this is propagated upwards again, and the representation of an entire
\ds is simply an $OR$ of all of its branches:
$$R(X) = str(b_1) || ``|" || str(b_2)\ldots ``|" || str(b_n)$$
\subsection{Complexity and Expressiveness Analysis}

\notera{can we relate the results here to the points of 2.3?}

We analyze next the complexity of learning a \ds, performing branch-and-merge,
serializing the \ds as well as matching new strings. Note that by analyzing
our approach's complexity and expressiveness we can better understand how it
fits into the larger picture of techniques for information summarization and
extraction.

\textbf{Complexity:} Earlier, we showed that a \ds is a
DAG where each node represents a character distribution, internally
implemented as a set of linked lists of character symbols.
This representation supports fitness, comparison,
and generation algorithms.  Table \ref{table:complexityanalysis}
shows the time complexity of these algorithms in \system -- all algorithms in
\system are polynomial in the input size.  There are three
main algorithms: ``Scoring'', which assigns a fit score to a candidate word as
a function of how well it fits into an existing \ds, ``Branch and Merge'',
which samples a data column and decides how to contract or split the \ds when a
new sample is introduced, and ``Serialization'', which converts a \ds to a
human-readable and regex compatible notation.  The ``Scoring''
algorithm is used for both building a \ds, as well as matching a tuple against
an existing \ds, e.g., for outlier detection. 

\textbf{Expressiveness:} The ``Scoring'' and ``Branch and Merge'' algorithms
combined yield a data structure with the same expressiveness as
that of DAFSA.  Below we provide proofs of expressiveness
\ds w.r.t. to regular languages.

\begin{lemma}
A \ds can be converted in polynomial time to a DAFSA.
\end{lemma}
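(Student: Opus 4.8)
The plan is to build the target DAFSA bottom-up, following exactly the layer hierarchy of the \ds, and to argue that each stage runs in polynomial time while preserving both acyclicity and determinism. Recall that every symbol layer $l_i$ represents a \emph{finite} set of characters $\mathrm{chars}(l_i) \subseteq \Sigma$ over the (finite) ASCII alphabet $\Sigma$, regardless of whether it is stored as a single character, an OR-statement, a character class, or a wildcard. Hence I would first map each $l_i$ to a two-state gadget with one transition per character in $\mathrm{chars}(l_i)$. Composing these gadgets in series along a token's linked list of symbol layers yields, for a token structure $k = (l_1, \ldots, l_m)$, a deterministic acyclic automaton recognizing exactly $\mathrm{chars}(l_1)\cdots\mathrm{chars}(l_m)$, of size linear in $\sum_j |\mathrm{chars}(l_j)|$. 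Interleaving the token automata of a branch with single-character delimiter transitions (the delimiters $h_i$ retained during learning) gives, by the same serial composition, a deterministic acyclic automaton $A_b$ for each branch $b$, again of linear size.

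The only genuinely nontrivial step is the top level, where a \ds is an $OR$ over its branches $b_1, \ldots, b_k$ (the serialization $R(X) = str(b_1)||str(b_2)\ldots$). I would form the union automaton by identifying the start states of the $A_{b_i}$ (equivalently, adding $\varepsilon$-edges from a fresh source), which is manifestly acyclic and recognizes the finite union language. To obtain a \emph{deterministic} acyclic automaton, i.e. a DAFSA, I would then run the subset construction, optionally followed by suffix-sharing minimization. The observation that makes this tractable is depth alignment: because every branch automaton is a simple path that consumes exactly one input character per transition, all NFA states reachable after reading a prefix of length $\ell$ lie at character-depth $\ell$. Consequently every reachable subset is fully described by the set of branches still ``alive'' at depth $\ell$, namely $\{i : w_j \in C_{i,j}\ \text{for all } j \le \ell\}$, where $C_{i,j}$ is the character set at depth $j$ of branch $i$. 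This immediately yields acyclicity of the determinized automaton (transitions only increase depth) and bounds the number of DFA states by $(L+1)\,2^{k}$, with $L$ the maximum branch length.

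The main obstacle is precisely this $2^{k}$ factor: in general the union of fixed-length pattern branches can force an exponential-size minimal DFA (e.g. $k$ branches each fixing a $1$ at a distinct position and wildcarding the rest realize all $2^{k}$ alive-sets), so no sub-exponential DAFSA can exist when the branch count is unbounded. I would resolve this by appealing to the \emph{max branches} parameter of \system: $k$ is a user-supplied bound (empirically $k \le 8$ on real data), hence a fixed parameter rather than part of the asymptotic input size. Under this regime the $2^{k}$ factor is constant, the subset construction visits $O(L)$ states, each processed in $O(|\Sigma|\,k)$ time, and the overall conversion is polynomial in the size of the \ds. The remaining bookkeeping, namely that each determinized transition is well defined, that the accepting states are exactly the depth-$\ell$ states at which some branch terminates, and that the null-padding convention of the scoring-fit definition introduces no spurious cycles, is routine and would be checked directly from the layer definitions.
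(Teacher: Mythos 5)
Your proof is correct, and it is considerably more careful than the paper's, which disposes of the lemma in two sentences: it observes that a \ds is already a DAG whose nodes accept single characters or character classes, asserts that the structure is deterministic because ``the same string never occupies more than one branch,'' and concludes that a BFS traversal turning edges into transitions yields the DAFSA directly. The substantive difference is your handling of the top-level union over branches. The paper's determinism claim is really a statement about how the learning algorithm assigns each training string to a unique branch; it does not imply that distinct branches have disjoint first-character sets, so the naive union of branch automata at a common start state can be nondeterministic, and you are right that a subset construction is then required. You are also right that this is exactly where polynomiality could fail: your example of $k$ branches each pinning a single position shows the determinized automaton can need on the order of $2^{k}$ states, so the conversion is polynomial only because \system caps the branch count via the max-branches parameter. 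That makes your result fixed-parameter polynomial rather than polynomial in the unrestricted worst case --- a caveat the paper never surfaces, and one worth stating explicitly since the lemma as written quantifies over all \ds{s}. What the paper's route buys is brevity, at the cost of leaving the determinization of the branch union implicit; what yours buys is an honest accounting of the one step that is not ``trivial via BFS,'' together with a clear identification of the structural assumption (bounded branches) that the system actually relies on.
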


\begin{proof}
Since a \ds is a DAG where each node represents a character distribution, a DAFSA that accepts
all instances accepted by \ds can be trivially built in polynomial time via a BFS traversal of
the \ds. Nodes either accept a single character, or any character from a ``character-class''. 
Edges in the \ds are transitions in the resulting DAFSA.  Also note that this conversion to DAFSA
can be done in polynomial time because \ds itself is deterministic, e.g.\ the same string never
occupies more than one branch in the \ds.
\end{proof}

\begin{lemma}
A DAFSA can be converted in polynomial time to a \ds.
\end{lemma}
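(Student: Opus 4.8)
The plan is to exploit the fact, noted in the previous lemma and in the complexity discussion, that both objects are directed acyclic graphs, and to give a structure-preserving translation from the edge-labeled DAFSA to the node-labeled \ds. First I would fix the shape of the two objects: a DAFSA is a DAG with a distinguished start state, a set of accepting states, and transitions labeled by single characters, accepting the finite set of strings spelled out along start-to-accept paths; a \ds is built from symbol layers, each holding a character distribution, with edges acting as transitions. Following the previous lemma and the complexity discussion, I treat the \ds as a directed acyclic graph of such nodes (taking the input as undelimited, so that no token-splitting is needed) and read a string off along an initial-to-terminal traversal. The goal is then to produce a \ds whose traversals are in bijection with the DAFSA's accepting paths.

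The core construction I would use is to move each transition label onto a symbol-layer node. Concretely, I would create one symbol layer for every transition of the DAFSA and set its distribution to the degenerate point mass on that transition's character (so the layer is \emph{not} represented as a class, and its character set is the single label); this is a legal symbol layer, since the $\chi^2$-based class decision only matters during learning, not for a directly specified representation. I would then wire these nodes to mirror DAFSA adjacency: the node for a transition $e$ feeds into the node for a transition $e'$ exactly when the target state of $e$ is the source state of $e'$. Nodes for transitions leaving the start state become initial, and nodes for transitions entering an accepting state are marked terminal. A string is accepted by the resulting \ds iff it can be read off along some initial-to-terminal path, which by construction coincides exactly with a start-to-accept path of the DAFSA, so the two accept the same language.

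For the complexity bound I would observe that the construction creates one node per DAFSA transition and one \ds edge per length-two chain of transitions, both polynomial in the size of the automaton, and that everything is produced by a single traversal of the DAFSA, giving polynomial time. Where convenient I would additionally collapse sibling single-character nodes that share the same source and target into one class or ``OR'' node (the usual symbol-layer representation), which never increases size and makes the output look like a genuinely learned \ds; this step is optional and does not affect correctness.

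The step I expect to be the main obstacle is the mismatch between the edge-labeled DAFSA and the node-labeled \ds, where two issues compound. First, a symbol layer carries its character on the node rather than on an incoming transition, so a naive state-to-node map breaks whenever a state is entered by transitions bearing different characters; relocating each label onto its target is exactly what forces the one-node-per-transition choice above, and I must check that this preserves determinism (the same string never traverses two distinct paths, matching the DAFSA determinism emphasized in the previous lemma) while keeping the node count linear in the number of transitions rather than triggering an exponential path-by-path enumeration. Second, the DAFSA may branch in the \emph{middle} of a string, whereas the strictly layered picture of a \ds branches only at the top (an OR of branches) with linear tokens beneath; reconciling these is precisely why I rely on the DAG characterization of \ds asserted earlier, since pushing every mid-string choice up to the branch level would risk an exponential number of branches. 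Establishing that the per-transition node duplication restores a well-defined node-labeling on the DAG while remaining polynomial is the crux of the argument.
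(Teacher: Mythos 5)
Your construction is, in substance, the same one the paper intends: the paper's entire proof is the single sentence ``as in the above proof, a \ds{} can be trivially built in polynomial time via a BFS traversal of the DAFSA,'' and your one-node-per-transition translation is exactly the fleshed-out version of that traversal. What you add is genuinely useful and not present in the paper: the paper never addresses the edge-labeled-versus-node-labeled mismatch, and your observation that a naive state-to-node map fails when a state has in-transitions with different labels (forcing the per-transition duplication, which keeps the node count linear) is the step that actually makes ``trivially'' true. Your second worry is also well placed: the learning-oriented description of a \ds{} in Section III presents each branch as a strictly linear list of tokens with branching only at the top level, under which reading a DAFSA with mid-string branching could require exponentially many branches; the polynomial bound only goes through under the DAG characterization given in the complexity subsection, which is the reading you (correctly) adopt and which the paper implicitly assumes without comment. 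So your proof is correct, follows the paper's route, and in fact closes two gaps the paper's one-line argument leaves open; the only caveat is that its validity rests on taking the DAG view of \ds{} as definitive, a tension in the paper's own definitions rather than in your argument.
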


\begin{proof}
As in above proof: a \ds can be trivially built in polynomial time via a BFS traversal
of the DAFSA.
\end{proof}

\begin{theorem}
\ds expressiveness is equivalent to the set of regular languages that can be represented by DAFSA.
\end{theorem}

\begin{proof}
From the lemmas above, it follows that for every \ds there is at least one equivalent DAFSA, and
for every DAFSA there is at least one equivalent \ds.
\end{proof}

\begin{theorem}
\ds is equally as expressive as the finite regular languages, and is thus less expressive
than DFA.
\end{theorem}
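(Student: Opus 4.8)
The plan is to reduce everything to the immediately preceding theorem, which identifies \ds expressiveness with the class of languages recognizable by a DAFSA. Given that equivalence, it suffices to (i) pin down exactly which languages a DAFSA can recognize, and (ii) exhibit one regular language that no DAFSA---and hence no \ds---can capture.

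First I would establish that every language accepted by a DAFSA is \emph{finite}. The crucial property is acyclicity: a DAFSA has no directed cycles, so every accepting run is a simple path from the start state to an accepting state. In a finite acyclic directed graph there are only finitely many such paths, and each has length bounded by the number of states; since each path spells a single string, the accepted language is a finite set of strings of bounded length.

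Next I would establish the converse, that every finite regular language is recognized by some DAFSA. Given a finite set $L$ of strings, I would build the prefix tree (trie) that accepts exactly the members of $L$; a trie is acyclic by construction, and minimizing it yields a DAFSA recognizing $L$. Combining both directions shows that DAFSA recognize \emph{exactly} the finite regular languages, so by the preceding theorem \ds does as well.

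For the strictness claim, I would exhibit a single infinite regular language that separates \ds from DFA---for instance the language $\{a^n : n \geq 0\}$ denoted by the regex $a^*$. This language is regular, being accepted by a one-state DFA with a self-loop, yet it is infinite and therefore not recognizable by any acyclic automaton, giving strict containment. The one step requiring care is the acyclicity-to-finiteness argument: it is conceptually immediate but rests on reading ``acyclic'' as forbidding directed cycles, which is precisely what bounds both the number and the length of accepting paths in the finite state graph.
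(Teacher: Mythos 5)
Your proposal is correct and takes essentially the same route as the paper: both rest entirely on the preceding theorem's identification of \ds expressiveness with that of DAFSA. The paper's own proof merely asserts that DAFSA is less expressive than DFA (remarking only that \ds cannot minimally represent languages with cycles), whereas you supply the facts it leaves implicit---acyclicity bounds the number and length of accepting paths and hence forces the accepted language to be finite, a trie realizes any finite language as a DAFSA, and the infinite regular language $a^{*}$ witnesses strictness---so yours is the more complete version of the same argument.
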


\begin{proof}
Since \ds is equivalent to DAFSA, and DAFSA is less expressive than DFA, it follows that \ds is
necessarily less expressive than DFA.  Specifically, \ds cannot \emph{minimally} represent regular
languages that contains cycles.
\end{proof}

Note that we are not interested in learning minimal DFA in \system.  Indeed, even if we had chosen
a data structure that has the same expressiveness as that of DFA (e.g., it allows cycles), there
is no polynomial time algorithm guaranteed to produce a DFA of size at most polynomially larger than
the smallest consistent DFA using only positive samples \cite{minDFA}.

In practice, we also do not need to learn minimal DFAs here because our positive samples are drawn
from highly structured data, and instances of each language are finite, e.g., emails, telephone numbers,
and chemical identifiers. The expressiveness of DAFSA alone is quite powerful and covers all of our
finite languages use cases, while also doing a good job at situations where a dataset attribute is not
finite and the user only cares about tuples up to a certain size.  For example, assuming a dataset attribute
is captured by a small cyclic DFA, but we are only interested in instances of length at most $k$,
a DAFSA that represents this finite subset of the original language, and that is at most $k + 1$ states
larger than the DFA, can be obtained in polynomial time.


\begin{table}
\small
\centering
\begin{tabular}{|c|c|}
\hline
\textbf{Symbol} &
\textbf{Definition} \\
\hline
$W_d$ & Data column width (max tuple length). \\
\hline
$S_d$ & Number of items sampled from data column. \\
\hline
$B_x$ & Number of branches in the \ds. \\
\hline
$N_x$ & Number of nodes in the \ds. \\
\hline
\end{tabular}
\caption{List of symbols used in complexity analysis.}
\label{table:notation}
\end{table}

\begin{table}
\small
\centering
\begin{tabular}{|c|c|}
\hline
\textbf{Algorithm} &
\textbf{Time Complexity} \\
\hline
\emph{Scoring} & $O(M_x + N_x) \equiv O(B_x * W_d)$ \\
\hline
\emph{Branch and Merge} & $S_d * (scoring + B_x * W_d)$ \\
\hline
\emph{Serialization} & $O(M_x + N_x) \equiv O(B_x * W_d)$ \\
\hline
\end{tabular}
\caption{Time complexity for \system algorithms.}
\label{table:complexityanalysis}
\end{table}


\section{Comparing \ds{s}}
\label{sec:impl}

While there exist polynomial time algorithms for checking equivalence of
DFAs \cite{sipser_ToC} -- which implies that \ds can be checked for equality
against other DFAs in polynomial time -- we still require a distance measure.
We want to compare \ds{s}, so that we can identify columns with
syntactically similar values. We also wish to compare them with regexes, so that
we can propagate information associated to the regexes to the columns
represented by \ds. We explain how we achieve this in this section. We also
introduce a technique in \ref{subsec:lsh} to reduce the comparison complexity
and allow \system to be used in settings with large numbers of attributes.

\subsection{Measuring Similarity for Comparison}
\label{subsec:baseline}

The comparison operation of \system relies primarily on the scoring fit defined
in the previous section and the Central Limit Theorem, as we explain below.

\subsubsection{Comparing with other \ds{s}}
\label{subsec:otherstructures}

We want a \emph{syntactic distance} function between the structure represented by
different \ds, such as $D(X_1, X_2): \mathbb{R+} \times \mathbb{R+}$, that
returns a pair of scores between 0 and 1 representing how well the structure of each
\ds ``fits'' into the other.  Previously, in Section \ref{subsec:scoring}, we
discussed a \emph{scoring fit} obtained when fitting a tuple to a \ds. We
define now the fit of a \ds, $X_1$ \textit{into} $X_2$ as the average scoring
fit of the set of tuples represented by $X_1$ that fit $X_2$. 

In general, it is infeasible to generate \textit{all} possible tuples
represented by a \ds. Instead, we model $D(S, X_2)$ as a distribution for which
we want to estimate the mean fit with a certain degree of confidence. This
reduces the problem from one of generating all tuples, to one of generating a
subset of tuples that will allow us to estimate the mean fit in a statistically
significant manner. However, to reliably estimate the mean fit we would need
the underlying distribution of the data, which we do not know. We also do not
want to make assumptions about this distribution: it will be multi-modal at best,
and completely irregular at worst. 

To address this, we use the central limit theorem. Our idea is to sample
 the distribution in groups, taking sample averages. These sample averages
will approximate a normal distribution around $\mu$, the desired mean. Thus, in
order to compute the fit of $X_1$ in $X_2$, we generate groups of $n$ tuples
from $X_1$ and calculate their mean fit into $X_2$, as well as the standard
deviation of the approximately normal distribution. We use a confidence interval
of 95\% to estimate the sample size.

\subsubsection{Comparing with regexes}
\label{subsec:regex}

When comparing the structure represented by a \ds, $X$ with one represented by a
regex, $R$, we also want to obtain a tuple of scoring fits: how well $X$ fits
$R$ and the other way around. As with the comparison process between \ds{s}, our
approach involves generating tuples from the \ds (or regex), and then measuring
how well the generated tuples fit the regex (or \ds). The difference from the
approach in the previous section is that tuple values are binary, \ie either $X$
fits $R$ or it does not (and vice versa).  

To compute the similarity between a \ds a regex, we can
calculate the probability of the structure held in a \ds, $X$, fitting a regex,
$R$, as follows:
$$P(fit(X, R)) = \sum_{n=1}^{N} match(g(X, n), R) / N $$
where the function $g(X, n)$ generates a tuple from $X$ and the function $match$
returns 1 if a tuple fits $R$ and 0 if it does not. The total number of draws,
$N$ is chosen through standard application of the CLT, which allows us to treat
this as estimation of $\hat{P_{fit}}$, a Bernoulli random variable, and
therefore get an approximation within a certain range and confidence interval.

To compute the fitness of $R$ with respect to $X$, we use existing libraries
that produce strings from existing regular expressions, commonly known as
\emph{xeger}. Using one of these \emph{xeger}-like tools, we generate tuples
from the regular expression and then we apply the same technique in the opposite
direction.

We use this approach to compare \ds with already existing regular expressions
for our automatic label assignment application. We obtain good results that we
present in the evaluation section. However, it is worth noting a few limitations
of the approach with respect to the \ds-\ds comparison method. 

First, if a regex is too specific the similarity with a nearby structure may be
counter-intuitively low. For example, the structure of a regex that represents exactly
\emph{"ABCD"} will have a low similarity to a \ds's structure that represents
\emph{"ABCE"}, while this would not be the case if the two structures to be compared
would be represented by \ds{s}.

Second, due to our need to generate tuples from the regular expression,
the regex must be finite, so wildcard characters are not allowed. 
Although seemingly limiting, this is not a great disadvantage, as \emph{highly
structured} tuples will tend to lack wildcard characters -- which 
indicates a lack of structure.

\smallskip

\textbf{Why not compare regexes with the original data directly?} A natural
question is why do we compare a \ds with a regex instead of comparing the
original data directly to the regex. There are three key advantages to our
approach. First, it is easy to sample from a \ds, as it already represents the
branches in the underlying data. The alternative would be to perform expensive
random sampling on the data directly, which is difficult if we want to sample
from all the possible syntactical variations. Second, sampling from \ds involves
generating tuples in-memory and feeding them in streaming to the \ds, as opposed
to accessing and reading data from a data source. This is especially beneficial
because we need to repeate this operation every time a new regex is added to the
library, which happens often when multiple analysts participate in the process.
Last, it is more convenient to compare the \ds learned by \system to the regexes
as comparisons can naturally be parallelized, and once the \ds is learned, it is
readily available to be used with other applications.

\subsection{Efficient Large Scale Comparison}
\label{subsec:lsh}

Recall that one of our applications is to find which attributes are
syntactically similar. Naively, this entails performing an all-pairs comparison
of \ds, an $O(n^2)$ operation that becomes prohibitively expensive in settings
with large numbers of attributes. We rely on an approximate technique based on
locality-sensitive hashing (LSH) \cite{lsh} with minhash signatures. 

\mypar{A primer on LSH and minhash}To reduce the complexity of all-pairs
comparison from $O(n^2)$ to $O(n)$ we can model the problem as one of
approximate nearest neighbors (ANN). LSH solves ANN by using a family of hash
functions that bucket items -- \ie \ds of attributes in our case -- into the
same bucket when these are similar according to some similarity metric. When the
using Jaccard similarity, an effective method is minhash.  \cite{minhash}. With
the minhash method, the elements of a set are hashed with $K$ different hash
functions, and the minimum hash value for each $k$ function is chosen, leading
to the signature of the element.  The core insight of minhash is that the
probability of two minhashes being similar equals the Jaccard similarity.

\smallskip

Adapting LSH with minhash to \ds is challenging because we do not have sets of
elements, but \ds{s} that can generate them. The \ds, however, does not
generate sets of tuples deterministically, and the space of tuples it represents
can be very large, making it difficult to generate good minhash signatures. In
addition, instead of estimating the syntactic similarity of \ds we would be
just estimating the similarity of the sets of tuples they generate, which is not
what we want. For this method to work, we need a way of generating minhash
signatures from \ds{s} \emph{deterministically} and in a way that captures the
syntactic features learned during the building process.

We solve this by generating triples of the form (character, last\_hinge, index).
The first element represents the character or character class, the second one is
used to determine the token of which the character is part, and the last one
is the position of the character within the token. Codifying all this
information in triples preserves the structural information in a way that allows
us to still employ minhash. For example, for the string \emph{AB;CD}, we would
generate the set (A,0,0),(B,0,1),(C,1,0),(D,1,1)). With the set available, we
then use minhash to obtain a signature.

In our evaluation we show that this method  greatly reduces the comparison
runtime, with a minor reduction in accuracy.

\section{Evaluation}
\label{sec:evaluation}



\notera{R1.D2.3 - About Xtructure-Xtructure/Xtructure-Regex similarity measurements,
the number of generated tuples and the time cost of each Xtructure-Tuple
similarity meansurement are not provided}

\notera{R2.D4 - lots of other small things here}

In this section, we look at the performance of \system and study how it helps
address our motivating applications. Using a range of real datasets and
workloads we (1) study how \system can propagate labels from annotated regexes
to columns in the datasets (\ref{subsec:ala}); (2) use \system to learn \ds{s} on
columns of a dataset, and use these \ds{s} to identify syntactically similar
content (\ref{subsec:comparison}); and (3) use \system to detect syntactical
outliers from real data (\ref{subsec:outlier}). We also conduct a series of
microbenchmarks to understand the performance of \ds (\ref{subsec:micro}). 


\mypar{Datasets and setup} We use the following datasets: i) \textbf{university
data warehouse (DHW)} which consists of 161 tables and 1690 attributes with
information about departments within the university; ii) \textbf{ChEMBL (CHE)},
a public chemical database \cite{Gaulton2012ChEMBL} with 70 tables and 461
attributes; iii) \textbf{data.gov (GOV)}, US open government data, consisting of 2250 CSV files; and iv) \textbf{MassData (MAS)}, the open
government data from  Massachusetts, from which we use 10 attributes
for our outlier-detection experiment. For the
outlier-detection experiments we also use the KDDCUP99 and Forest Cover
datasets~\cite{uci}, which are standard datasets used in outlier
detection. Generally, we found that the results were 
robust to variation in the configuration parameters of \ds;  unless otherwise specified we set maximum branches to 3, the branching threshold to 0.1, and the capture threshold to 85\%.
For all the single-threaded experiments (all except as indicated), We use a
computer with a 1.7GHz Intel Core i7 and 8GB RAM.



\subsection{Automatic Label Assignment}
\label{subsec:ala}

To automatically label columns, we need (regex, label) pairs that associate
meaningful labels to the syntactic patterns described by the regexes. Given such
a library (which works as well as ground truth), we can use \system to learn 
syntactic patterns for each column in the database and compare these patterns
with the regexes in the library. Then, when we find a match, we assign the label
to the column. The quality of this application depends on the quality of our
comparison technique, which we evaluate here.

To obtain the library of (regex, labels), we manually assigned (regex, label)
pairs to more than 4,000 attributes from \textsf{DWH}, \textsf{CHEM} and \textsf{GOV}. The
specific number of attributes with assigned labels is shown in the ``\# total attrs.'' column
of table~\ref{table:labelassignmentresults}. The regexes are drawn from
regexlib.com~\cite{regexlib}, which has a collection of generic patterns. We
also added domain-specific regexes for chemical datasets. In both cases we choose
the most specific regex possible. For example, for an attribute containing even
numbers, we would use ``\textbackslash d\textbackslash d(0$|$2$|$4$|$6$|$8)''
rather than ``\textbackslash d\textbackslash d\textbackslash d''. 

We used \system to learn the \ds{s} for the 4000+ attributes and searched for
the nearest regex in the regex library, using the algorithm of
section~\ref{subsec:comparison}. We compared this nearest regex to the ground
truth regex we manually associated with each attribute. Table
\ref{table:labelassignmentresults} shows that we find over 94\% of correct
matches for the three datasets we use. This means that we can automatically
assign labels to 94\% of the data, which vastly reduces the human effort that
would otherwise be necessary. 

\begin{table}[h]
\small
\centering
\begin{tabular}{|c|c|c|c|c|}
\hline
\textbf{Dataset} &
\textbf{total attrs.} &
\textbf{correct matches} &
\textbf{\% matches} \\
\hline
\emph{DWH} & 1504 & 1417 & 94\%  \\
\hline
\emph{CHE} & 307 & 294 & 95.5\% \\
\hline
  \emph{GOV} & 2476 & 2355 & 94.9\% \\
\hline
\end{tabular}
  \caption{\ds-regex correct matches vs dataset.}
\label{table:labelassignmentresults}
\end{table}

Not all matches are equally useful. For example, we find matches of columns to
both InchI numbers and keys as well as to SMILES, and both of are annotated
with \emph{chemical id}. This vastly improves the discoverability of these
attributes, helping analysts with their tasks. In other cases, the match is with
a low specificity regex such as ``strings'' or ``numbers'', which although
correct is not insightful. This is an artifact of the quality of the (regex,
label) pairs we had available. In the enterprise scenario, we expect registries
of regexes built by domain experts to be of high quality, therefore leading to
good quality label annotation of the data.

In summary, this experiment shows that \textbf{\system is able to propagate 
labels to attributes for a wide range of attribute formats when a
library of (regex, label) is available}.



\subsection{Summarization and Comparison}
\label{subsec:comparison}

In this experiment, our goal is to use \system to find pairs of
similar attributes in a large dataset; such columns often represent duplicates,
or possible identifiers that can be used in joins. For this
application, we obtained ground-truth data consisting of pairs of similar
attributes from \textsf{CHE}. We collect all attributes within
the dataset whose name contains ``id'' (eg. ``tid,'' ``cell\_id,'' ``tax\_id'');
 attributes were removed and tuples shuffled in random order; a
volunteer labeled pairs of these shuffled nameless
attributes as syntactically similar, or different. We obtained labels for about
1000 pairs of attributes.  We learn \ds{s} for each attribute.

We evaluate the effectiveness of \system at finding  similar pairs.
First we perform an all-pairs comparison between the learned \ds{s} using the
method described in \ref{subsec:otherstructures}. This is $O(n^2)$ but  is an
intuitive method, useful when the number of attributes is small, or when we want to
quickly find all IDs in a database that are syntactically similar to one pre-selected
column ID. In this experiment, the method labels a  pair of columns as similar when
their similarity is above a given threshold, and then we measure precision and
recall of the results, which we show as the ``All Pairs'' line in
\F\ref{fig:prcurves_sim}. The figure shows a good accuracy, with the method
reaching an F1 score of around 0.82, and maintaining constant high precision
until a recall of about 0.8. 

\begin{figure}[t]
\centering
\vspace{-.2in}
\hspace{-0.5cm}
  \includegraphics[width=0.9\columnwidth]{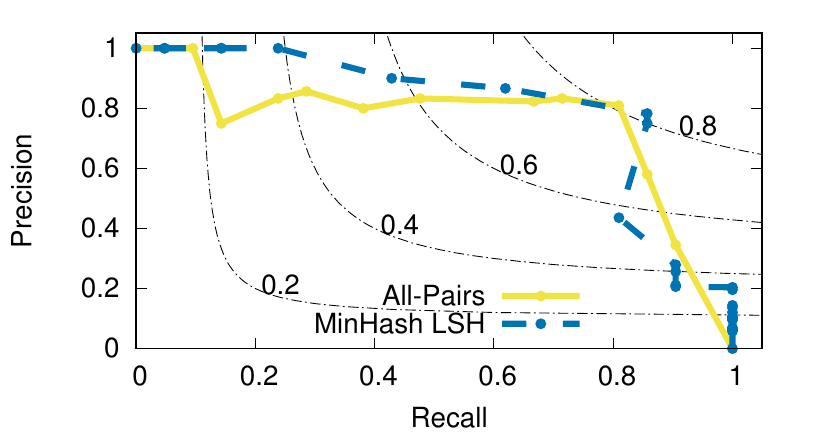}
\caption{PR curves for attribute similarity (ChEMBL)}
\label{fig:prcurves_sim}
\end{figure}

%
%

\mypar{Fast Comparison} As we have discussed, when the number of attributes is
in the thousands---which is often the case in large enterprises---an
all-pairs comparison becomes too expensive. To resolve this, we implemented a
variation of our method which uses the ``set signature'' approach described in
Section~\ref{subsec:lsh}. When using this approach, \ds{s} are clustered based
on the approximate Jaccard distance between their signature sets. These clusters
were then translated into pair labellings, giving an $O(n)$ time algorithm. The
precision and recall results for the same experiment using this method is shown
on the ``MinHash LSH'' line of \F\ref{fig:prcurves_sim}. The figure shows that
the quality of this alternative method is in fact similar to the all-pairs one,
with the curve shapes looking similar. The slight irregularities in the curve
(lack of smoothness) at high recalls are due to the cluster-based nature of the
LSH method, rather than direct comparison of each pair of attributes. This makes
sense because since we must pre-generate strings to generate the MinHash
signature, we make sure the strings uniformly represent the underlying data,
therefore increasing the signature quality. We further explore the details of
the performance tradeoff of these two methods in the microbenchmark in section
\ref{subsec:micro}.


\mypar{Qualitative Analysis} When the techniques yield errors, we found them to
be quite intuitive. For example, one common error we found was due to
irregularities in the data, such as two similar attributes not being detected because
one used ``nan'' to denote missing data, while the other used ``-1''. Another
common kind of error came from attributes with diverse representations and
implicit semantic meaning. For example, a human may label two attributes
containing variable length decimal numbers as different if the mean or standard
deviation of the numbers is different, which, in some cases \system is not able
to detect, yielding false positives.

%
%


\subsection{Syntax-Based Outlier Detection}
\label{subsec:outlier}

Next, we evaluate \system's ability to detect outliers, or anomalies, within
single attributes in a dataset. For this application, we use both the
\textsf{MAS} dataset (for which ground truth was manually collected through
volunteers), and several ``benchmark'' datasets in the field of outlier
detection, namely the KDDCUP 1999 intrusion detection dataset and the Forest
Cover dataset, both obtained from the UCI Dataset Repository~\cite{uci}. For
quantitative analysis, we use three of the outlier types from KDDCUP, as well as
the Forest Cover dataset; we then utilize the manually labeled \textsf{MAS} for
qualitative discussion. 

We learn a \ds per attribute from a subset of the tuples (with outliers present). We then freeze the \ds and
feed it new tuples, obtaining a fitness score which we use to find
outliers. The score is transformed into an \emph{outlier score} using a
weighted combination of the scoring fits (section~\ref{subsec:scoring}) of each
branch. We mark outlier scores that exceed an \emph{outlier threshold}. We present the
resulting PR curves in \F~\ref{fig:prcurves_outliers}.

\begin{figure}[t]
\vspace{-.2in}
\centering
\hspace{-0.5cm}
\includegraphics[width=0.9\columnwidth]{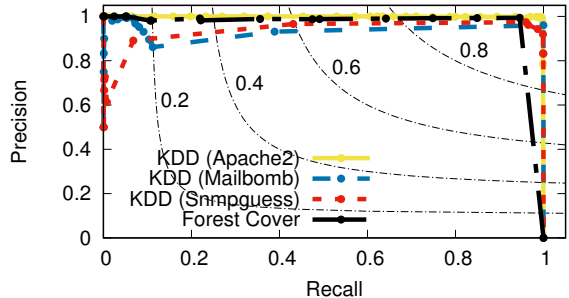}
\caption{PR curves for outlier detection}
\label{fig:prcurves_outliers}
\end{figure}

The figure shows the precision and recall for different values of the
\emph{outlier threshold}. The results show near-perfect performance on all four
of the large datasets. Since \system excels with large quantities of structured
data, we next perform a qualitative analysis of outlier detection using
\textsf{MAS}, a smaller but more complex real  dataset where outlier marking can actually be quite
subjective; this allows us to identify the areas where \system has the most
difficulty.

\mypar{Qualitative Analysis on MAS dataset} Many of the errors made by \system
are ambiguous to humans; in particular, the majority of the mistakes made were
in an attribute representing street address suffixes (ST, BL, AV, etc). The
source of ambiguity is lower-frequency, but still valid street suffixes, such as
``BL'' for boulevard, or ``PL'' for place, and whether or not \system marked
these as outlieries is simply a function of the aforementioned outlier threshold.
On the positive side, the system found outliers that were indeed errors in the
data, such as a ZIP code in Boston with more than 8 digits (the standard is 5
digits), or the tuple \emph{MIDNIGHT} among tuples representing hours as
digits.


%

\begin{figure}[!htb]
  \centering
  \subcaptionbox*{}[.7\linewidth][c]{
  \centering
    \includegraphics[width=.8\linewidth]{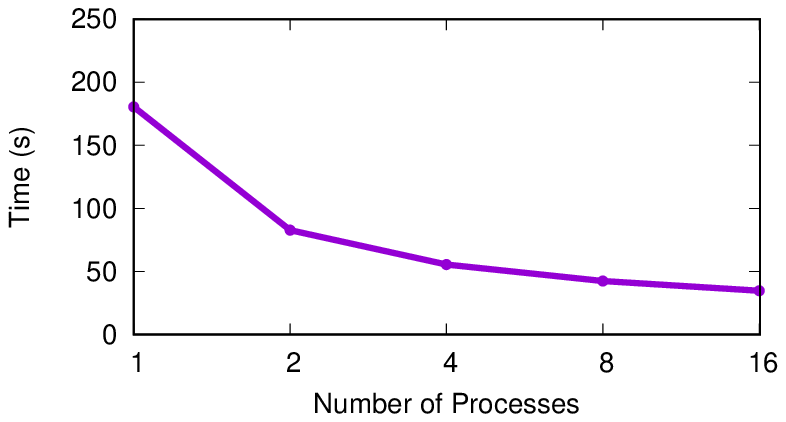}}\quad
  \subcaptionbox*{}[.7\linewidth][c]{%
  \centering
      \vspace{-.3in}
    \includegraphics[width=.8\linewidth]{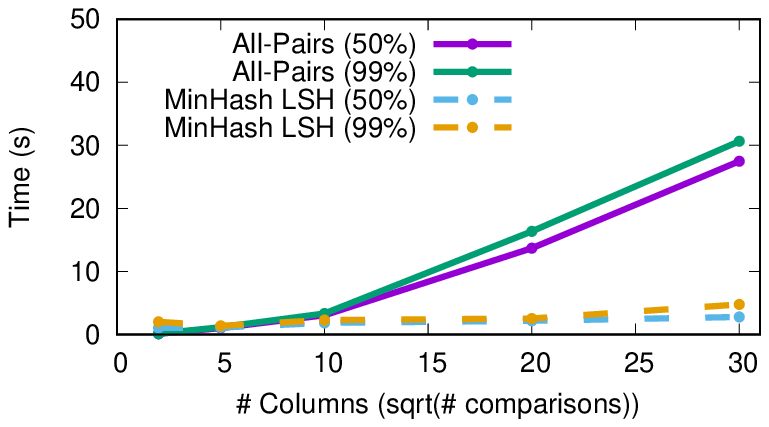}}\quad
    \vspace{-0.7cm}
  \caption{Top: Learning speed vs. (\# threads) on a 16-core machine, Bottom: All-Pairs vs MinHash LSH  methods}
  \label{fig:perfcomparison}
\end{figure}

\subsection{Microbenchmarks}
\label{subsec:micro}


\begin{figure*}[!htb]
\vspace{-.2in}
  \centering
  \subcaptionbox*{}[.3\linewidth][c]{
    \includegraphics[width=.3\linewidth]{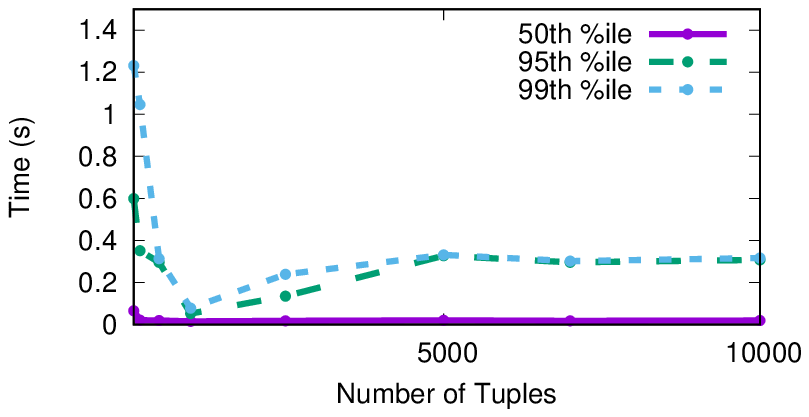}}\quad
  \subcaptionbox*{}[.3\linewidth][c]{%
    \hspace{-0.6cm}\includegraphics[width=.3\linewidth]{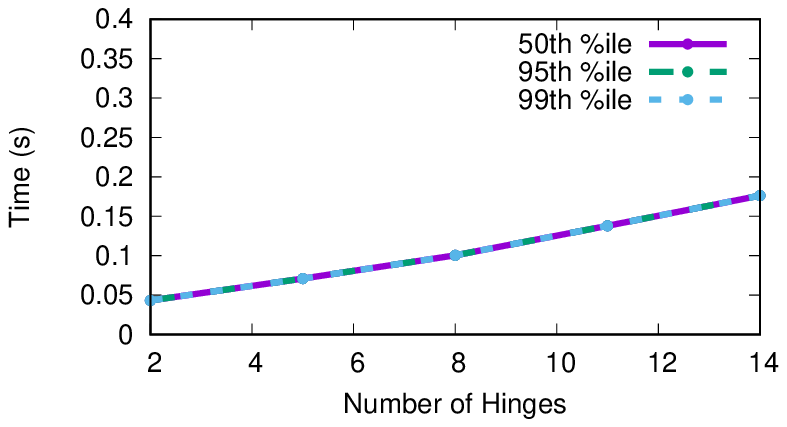}}
  \subcaptionbox*{}[.3\linewidth][c]{%
    \hspace{-0.5cm}\includegraphics[width=.3\linewidth]{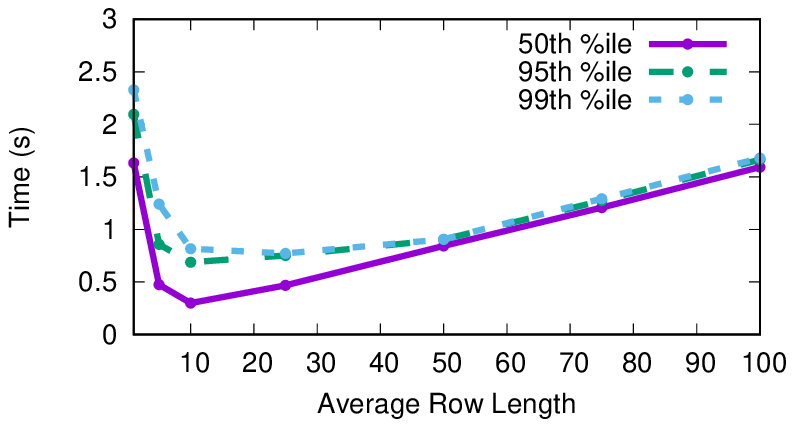}}\quad
  \vspace{-0.7cm}
  \caption{Performance micro-benchmarks using synthetic data sets.}
  \label{fig:perfmicro}
\end{figure*}


\vspace{.05in}
\noindent {\bf Learning Speed.} Properties such as number of tuples, number
of delimiters per tuple and pattern heterogeneity affect the performance of
\system. To measure these effects,  we generated synthetic data with varying properties and then
we ran \system on the data. \F\ref{fig:perfmicro}(a, b and c) shows the running
times averaged 10 times, with median, 95th and 99th percentile. 

In the first experiment (a), we use a fixed length data types (country currency
codes) and vary the number of tuples in the input dataset. In the second
experiment (b), we measure the impact of the number of hinges, which has an
effect on the number of tokens that \system must maintain. Here, we create input
datasets with variable number of hinges by concatenating YYYY-MM-DD formatted
dates, and fixed the number of tuples to 1000. Finally, in the third experiment
(c), we vary attribute value length using datasets with 1000 tuples and mixed
data types, which has an effect on the total number of branches that per \ds
that \system maintains. 

\system's performance in all 3 experiments grows linearly with the variable
of interest. Although absolute numbers are higher in the third experiment,
the overall runtimes are still small, with 99th percentile learning times
below ten seconds in all cases, and median generally times less than a second.

\vspace{.05in}
\noindent {\bf Parallel Scalability.} To understand the parallel scaling  of \system,
we generated data (about 20000 alphanumeric identifiers) and learned a \ds using
a different number of cores. We use \system with $max\_branches$ set to 7 and measure the time
 the learning process takes. Since the data is highly regular and \system 
only consumes a few samples before acheiving 95\% confidence and stopping early
(as described in section in~\ref{subsec:earlystop}), we disable early stopping
in order to accurately demonstrate the effects of parallelization.

\F\ref{fig:perfmicro} (d) shows our results. As expected, adding parallelism
reduces the total runtime up to the maximum number of hardware cores available
in the experimental machine. The system does not scale perfectly linearly after 8 cores
due to overheads during the merging stage of our algorithm, which could further
be reduced through optimizations, including hierarchical parallelization of the
merging operation itself.



\vspace{.05in}
\noindent {\bf All-Pairs vs Min-Hash Comparison.} Next we seek to understand the
difference in running time between the all-pairs and MinHash LSH methods for
\ds comparison, as was mentioned in attribute comparison
experiment in~\ref{subsec:comparison}. The experiment shows that in terms of results
the two methods perform very similarly in terms of accuracy on a real dataset;
in terms of runtime, we hypothesized that all-pairs method would be quadratic in
the number of attributes being compared pairwise, whereas MinHash LSH should be
linear. However, we also expected there to be lower overhead in the all-pairs
method for lower numbers of attributes. To verify this, we generate datasets of
uniform column length, but with varying numbers of columns. \ds{s} are learned
for each column (untimed), and then every pair of columns is compared using both
the all-pairs method and the MinHash LSH method. As \F~\ref{fig:perfcomparison} depicts, for
up to about 10 columns (which corresponds to on the order of 100 comparisons),
the all-pairs method out-performs MinHash LSH due to its low overhead, however
for larger numbers of attributes, the MinHash LSH method is superior in that it
scales linearly.

\noindent {\bf Invariance to Tuple Ordering.} Finally, we wish to measure the
invariance of \ds to the random shuffling of tuples. To do this, we take three
different synthetic attributes of various complexities, representing IP Address,
Title, and Latin Word. Each attribute contains 1000 tuples, and these are
shuffled in 20 distinct ways. Table~\ref{tab:shuffling} indicates the variation
across the unique shufflings, both in fitting time, and in the ``fitness score"
against the source column. The results show robustness against
bad orderings of tuples within an attribute.

\begin{table}[htbp]
\small
    \begin{tabular}{l|c|c|c}
	& IP Addresses & Titles & Latin Words \\ \hline
	Example & 159.112.55.237 & Dr. & cupiditate \\ \hline
	Mean (ms/line) & 0.27 & 0.23 & 1.0 \\
	Stdev (ms/line) & 0.07 & 0.05 & 0.3 \\
	Mean score & 0.18 & 0.25 & 0.41 \\
	Stdev score & 0.02 & 0.06 & 0.12 \\
    \end{tabular}
    \caption{Avg \& std.dev learning time, fit with random shuffling}
    \label{tab:shuffling}
\end{table}


\section{Related Work}
\label{sec:relatedwork}

In this section we discuss our contributions in the context of several
techniques and research areas related to \system.

\mypar{Information Extraction} \system is related to the vast literature of
information extraction (IE) in that it extracts a structural representation from
data. 
Most IE techniques have been proposed to extract structured
from totally unstructured data, such as text, or semi-structured data, such as
XML and HTML. In addition, most of those techniques require variable amount of
human input. Our approach differs in that it must work automatically and it
operates on structured data, producing one succint pattern that represents the
syntactic structured of a collection of input strings. We think of \system then,
as a technique related to IE that complements the existing techniques and
achieves good performance in important applications to large enterprises.

\mypar{Regular expression inference} This is the set of techniques most directly
related to \system; they can be used to extract syntactical patterns
from collections of strings. The most recent work uses multi-objective
optimization and aggressive space pruning to reduce the running time
\cite{Bartoli:2014:ASR:2780227.2780354, Bartoli:2016:IRE:2925263.2925390} of the
inference process.  Performance is still an issue for the method to be used in
enterprise settings as their evaluation shows---more than 40 min for learning a
dataset with 500 entries with 32 threads. In contrast, we have advocated and
demonstrated the better fit of \system, which reduces the unnecessary
expressiveness of regular expressions to gain in performance. \ds learns
patterns over a comparable data set on a single-core, single-thread in
significantly less time.

Other methods can be divided into whether negative examples are
required or not. Those that require negative examples are rarely suitable
in enterprise settings. Out of systems that only require positive examples,
\cite{Brauer:2011:EIE:2063576.2063763}, \cite{relie} and \cite{FERNAU2009521}
are the most relevant. With \cite{Brauer:2011:EIE:2063576.2063763} we
share our treatment of input characters as their character class (referred
to as token class in their case) to produce a higher level abstraction of input data.
Their method learns a cyclic DFA, while we have demonstrated that
this expressiveness is not necessary, and propose a more efficient learning algorithm.
Lastly, ReLIE \cite{relie} requires example regular expressions, that are then further refined. We
differ in that we operate without human input. Unlike all this work, we
focus on: i) designing \system to capture syntactical patterns in databases, and
not to solve the general -- and more complex -- problem of learning regular
expressions for infinite languages; ii) support efficient comparison of the learned patterns, which we
have shown helps in identifying syntactically similar content, automatically labelling data, and
identifying syntactic outliers.

\mypar{Program Synthesis}Program synthesis based methods have seen a surge in
popularity \cite{Lee:2016:SRE:2993236.2993244, Feser2015, flashextract}.
Unlike \system, their goal is typically to operate
and transform data, for example for data cleaning. This means that the
complexity of the structured they need to build and maintain internally is
higher than that of \system. For example, BlinkFill \cite{blinkfill} must build
an \emph{InputDataGraph} to then transform the data that is more expensive to
build than \system, and unnecessary for our goal. Other techniques, such as
\cite{Lee:2016:SRE:2993236.2993244, Feser2015} require negative examples and
differ again from our automatic technique.


\mypar{Other approaches} Last, approaches such as SystemT
~\cite{Krishnamurthy:2009:SSD:1519103.1519105} assume a human-in-the-loop,
infeasible in scenarios that require unattended operation. Potter's Wheel
\cite{potter} relies on dictionaries of predefined structures to identify data
errors and perform data transformations. Although we have demonstrated that
\system can be efficiently used to detect syntactic outliers, we do not share
goals and our methods to capture such patterns are different: both techniques
are orthogonal.

\section{Conclusions}
\label{sec:conclusions}


We presented \system, a method for learning syntactic structure of data as \ds.
While patterns in \system are less expressive than regexes, they are orders of
magnitude faster to learn, making it possible to learn patterns for thousands of
attributes which commonly occur in large databases.  Furthermore, \system performs
well at a number of database tasks, including automatic label assignment, column
summarization and comparison as well as outlier detection. 

\bibliographystyle{IEEEtran}

\balance

\bibliography{main}

\end{document}